\newtheorem{Theorem}{Theorem}
\begin{document}

\title{Measure of invertible dynamical maps under convex combinations of noninvertible dynamical maps}
	\author{Vinayak Jagadish}
	\email{vinayak.jagadish@uj.edu.pl}
	\affiliation{Institute of Theoretical Physics, Jagiellonian University, ul. {\L}ojasiewicza 11, 30-348 Krak\'ow, Poland}
				\author{R. Srikanth}
\affiliation{Theoretical Sciences Division,
Poornaprajna Institute of Scientific Research (PPISR), 
Bidalur post, Devanahalli, Bengaluru 562164, India}
	\author{Francesco Petruccione}
	\affiliation{School for Data Science and Computational Thinking, Stellenbosch University, Stellenbosch 7600, South Africa}
	\affiliation{National Institute for Theoretical and Computational Sciences (NITheCS),  Stellenbosch 7600, South Africa}
	\affiliation{Quantum Research Group, School  of Chemistry and Physics,
		University of KwaZulu-Natal, Durban 4001, South Africa}

\begin{abstract} 

We study the convex combinations of the $(d+1)$ generalized Pauli dynamical maps in a Hilbert space of dimension $d$. For certain choices of the decoherence function, the maps are noninvertible and they remain under convex combinations as well. For the case of dynamical maps characterized by the decoherence function $(1-e^{-ct})/n$ with the decoherence parameter $n$ and decay factor $c$, we evaluate the fraction of invertible maps obtained upon mixing, which is found to increase superexponentially with dimension $d$. 

\end{abstract}
\maketitle  
\section{Introduction} 

For a quantum system undergoing an open evolution~\cite{petruccione}, the time evolution of the system of interest is described by a quantum dynamical map which is a time-continuous family of completely positive (CP) and trace-preserving  linear time-dependent maps, $\{\Phi (t): t\geq 0, \Phi(0) = \mathbbm{1}\}$, acting on the bounded operators of the Hilbert space of the system of interest~\cite{sudarshan_stochastic_1961, Quanta77}.
The properties of the dynamical map is related to the time-local generator $L(t)$~\cite{gorini_completely_1976} in the time-local master equation, $\dot{\Phi}(t) = L(t)\Phi(t)$, with 
\begin{equation}
\begin{split}
L(t)[\rho]=&-\imath [H(t),\rho]\\&
+\sum_i \gamma_i (t)
\left(G_i(t)\rho G_i(t)^\dagger-\frac {1}{2}\{G_i(t)^\dagger G_i(t),\rho\}\right).
\label{gksl}
\end{split}
\end{equation}
Here $H(t)$ is the effective Hamiltonian, $G_i(t)$'s are the noise operators, and $\gamma_i (t)$'s the decoherence rates, describing the continuous-in-time description of the dynamics of the system of interest.
The dynamical map is divisible if
\begin{equation}
\Phi(t_f, t_i) = K(t_f, t) \Phi(t, t_i),\quad \forall t_f\geq t \geq t_i\geq 0.
\label{cpdivdef}
\end{equation}
The map is CP divisible if for all $t$'s, the propagator $K(t_f, t)$ is CP. If the decay rates $\gamma_i (t)$ are positive at all times, the map is CP divisible.  Otherwise, the map is CP indivisible. According to the CP divisibility criterion~\cite{rivas_entanglement_2010} of quantum Markovianity, a quantum dynamical map is Markovian if it is CP divisible at all instants of time. Any deviation from CP divisibility is, therefore, an indicator of non-Markovianity. This is also equivalent to temporary negativity of the decay rates in the time-local master equation~\cite{hall2010}. If the decay rates are constants, the generator corresponds to a Markovian semigroup. Such maps are time homogenous, i.e., 
\begin{equation}
\Phi(t_f, t_i) = \Phi(t_f - t_i, 0) :=  \Phi(t_f - t_i) .
\end{equation}

Convex combinations of quantum dynamical maps have been studied by various authors. An example to create a non-Markovian evolution by mixing two Markovian semigroups was reported in Ref.~\cite{chruscinski_non-markovianity_2015}. A similar idea appeared in an earlier paper~\cite{wolf_assessing_2008} prior to the advent of the divisibility and distinguishability criteria for quantum non-Markovianity. In Ref.~\cite{breuer_mixing-induced_2018}, the counterintuitive behavior was explained in terms of information flow between the system and the environment. An example of a convex combination of two non-Markovian dynamical maps leading to a Markovian evolution was discussed in Ref.~\cite{wudarski2016}. It was shown that an eternally non-Markovian evolution (with one decay rate being negative at all times, $t>0$) results from a convex combination of Pauli semigroups~\cite{megier_eternal_2017}. More generally, eternally non-Markovian evolution obtained by the convex combination of semigroups of generalized Pauli dynamical maps has been studied in Ref.~\cite{siudzinska_jpa_2020}.
Recently, the geometry of the region of non-Markovianity obtained by convex combinations of Pauli semigroups and time-dependent Markovian Pauli dynamical maps was addressed~\cite{jagadish_convex_2020,jagadish_nonqds_2020}. These results showed that the sets of CP divisible and CP indivisible Pauli dynamical maps are nonconvex. A measure of non-Markovianity was introduced~\cite{desantis2020} based on the minimal amount of extra Markovian noise that needs to be added via convex mixing in order to make the dynamical map Markovian.

In Eq. (\ref{cpdivdef}), if for a particular instant of time $t = t^{*}$, the map $\Phi(t^{*},t_i)$ is noninvertible, then there are multiple initial states that are mapped to the same output state. In this case, the propagator $K(t_f, t^{*}) =  \Phi(t_f, t_i) \Phi(t^{*} t_i )^{-1}$ is undefined. Such points $t^{*}$ are referred to as singular points~\cite{hou_singularity_2012} of the map. The presence of singularities in the map is reflected in the corresponding  time-local master equation where one or more of the decay rates could go to infinity. Despite the fact that the dynamical map could be momentarily noninvertible, the singularities are usually not pathological, in the sense that the density operator of the system of interest is well defined~\cite{jagadish_measure2_2019, jagadish_initial_2021}. Recently, convex combinations of dynamical maps with singularities have caught attention~\cite{utagi_singularities_2021,siudzinska_markovian_2021,jagadish2022noninvertibility,siudzinska_jpa_2022}. For generalized Pauli dynamical maps, it was shown that noninvertible maps cannot be produced by mixing invertible maps. The surprising result that convex combinations of noninvertible qudit Pauli maps can produce a semigroup was reported in Ref.~\cite{siudzinska_markovian_2021}. Subsequently, it was also shown that noninvertibility of the generalized Pauli input maps is necessary for this~\cite{jagadish2022noninvertibility}.  This prompts the question of whether the noninvertibility is sufficient. The answer is no: Some mixtures produce noninvertible maps. A natural followup to this line of inquiry would be to quantify the fraction of the mixtures of noninvertible maps that produce (non)invertible maps.

The physical motivation is ultimately about fighting noise in practical quantum information processing, especially in the context of near-term quantum devices. In principle, noise corresponding to a noninvertible map cannot be corrected. An experimentalist would want to avoid parameter regimes that would generate such noise. In a futuristic quantum computer that is fully fault tolerant and operating within the required noise thresholds, this is less of a problem. However, as we enter the noisy intermediate scale quantum era~\cite{preskill_quantum_2018}, where quantum devices are expected to lack the depth required to implement full quantum error correction, avoidance of noninvertibility can be a matter of practical concern. 
\color{black}

The paper is organized as follows.  We discuss the case of mixing noninvertible Pauli dynamical maps in the following Section. In Sec.~\ref{genpaulisec}, we extend our analysis to the case of generalized Pauli dynamical maps. Finally, we discuss the results and conclude in Sec.~\ref{conc}. 

\section{Convex Combination of noninvertible Pauli dynamical maps}
\subsection{Obtaining invertible maps and measure of the corresponding set}
\label{paulisec}
Consider the three Pauli dynamical maps, 
\begin{equation}
\label{paulichanndef}
\Phi_i (t)[\rho]=[1-p(t)]\rho + p(t)\sigma_i\rho \sigma_i, i= 1,2,3
\end{equation} 
with $p(t)$ being the decoherence function and $\sigma_i$ the Pauli matrices. The convex combination of the three Pauli dynamical maps Eq.~(\ref{paulichanndef}), each mixed in proportions of $x_i$ is, therefore,
\begin{equation}
\label{outputmappauli}
\tilde{\Phi}(t) = \sum_{i=1}^{3} x_{i} \Phi_i (t),  \quad (x_i >0, \sum_i x_i =1).
\end{equation}
We will call the three $\Phi_i (t)$'s input maps and $\tilde{\Phi}(t)$ the output map. Let each of the maps be characterized by the same decoherence function, 
\begin{equation}
\label{decohfunc}
p(t) = \frac{1-e^{-ct}}{n}, \quad n\geq 1, c >0.
\end{equation}
 From the corresponding time-local master equation for each of the input maps $\Phi_i$,
\begin{equation}
\label{megen}
L_{i}(t)[\rho] = \gamma (t) (\sigma_i\rho\sigma_i-\rho),
\end{equation}
with the decay rate
\begin{equation}
\gamma (t) = \frac{c}{(n-2)e^{c t}+2},
\end{equation}
it can be seen that for $n \ge 2$, the decay rates never become infinite at finite times $t$, and therefore, the input maps are all invertible. The output map $\tilde{\Phi}$ [Eq. (\ref{outputmappauli} ] satisfies the eigenvalue relation,
\begin{equation}
\label{eigpauli}
\tilde{\Phi}(t)[\sigma_i] = \lambda_i (t) \sigma_i,
\end{equation}
with
\begin{equation}
\lambda_i(t) =1- 2(1-x_i)p(t),
\label{eq:lambdaz}
\end{equation} 
with $p(t)$ given by Eq. (\ref{decohfunc}).
The eigenvalues $\lambda_i$ becomes singular at
\begin{equation}
t^{*}=\frac{1}{c}\ln \Big[\frac{2(1-x_i)}{2(1-x_i)-n}\Big].
\label{eq:tstar}
\end{equation}
From Eq. (\ref{eq:tstar}), the requirement for invertibility is
\begin{equation}
x_i \ge 1-\frac{n}{2}.
\label{eq:xi2}
\end{equation}
For invertible inputs, $n\geq2$, the right-hand side above is negative and, thus, any mixing coefficient $x_i$ satisfies the lower bound in Eq. (\ref{eq:xi2}). This bound monotonically increases as $n$ decreases in Eq. (\ref{eq:xi2}). If $n < \frac{4}{3}$, then $x_i < \frac{1}{3}$, a constraint that cannot be satisfied by all three maps \big(since in that case we would have $\sum_i x_i < 1$ \big). We now have the following result.

When the input map is invertible ($n\ge2$), the output map is necessarily invertible. For noninvertible inputs, if $n< \frac{4}{3}$, the output map is necessarily noninvertible. For input maps that are of the intermediate noninvertible range \big($2 > n > \frac{4}{3}$ \big), the output can be invertible or not, depending on the mixing coefficients. In the intermediate range of noninvertible inputs $n \in (\frac{4}{3},2)$, a fraction of the output maps will be invertible.  In the case of $n=4/3$, for the output map to be invertible, Eq. (\ref{eq:xi2}) implies that all $x_i \ge \frac{1}{3}$. Therefore, the output map is noninvertible for all values of mixing coefficients $x_i$'s except at the point of equal mixing where all $x_i$'s are 1/3 each. From Eq. (\ref{eq:lambdaz}), one can see that the eigenvalues of the output map, $\lambda_i(t) = e^{-ct}$  for the case of equal mixing, which is indicative of a semigroup. For the intermediate case,  i.e., $n \in (4/3,2)$, let us evaluate the measure $\Delta_{\rm invert}$ of invertible maps over all possible mixtures.

To this end, we note that $x_1$ can take values in the range $[1-\frac{n}{2}, 1-2 \times (1-\frac{n}{2})] = [1-\frac{n}{2}, n-1]$, where the lower bound is determined by Eq. (\ref{eq:xi2}), and the upper bound arises due to allocating this minimal mass of probability to the other two variables. The variable $x_2$ then is allowed to take values in the range $1-\frac{n}{2}$ and $\frac{n}{2} - x_1$.  We have
	\begin{eqnarray}
	\Delta_{\rm invert} &=& \frac{1}{\mu}\int_{1-\frac{n}{2}}^{\frac{n}{2}-x_1} dx_2 \int_{1-\frac{n}{2}}^{n-1} dx_1  \nonumber \\
	&=& \frac{(4-3 n)^2}{8},
	\label{eq:Delta}
	\end{eqnarray}
	where the normalization factor $\mu = \int_{0}^{1-x_1}dx_2 \int_{0}^{1} dx_1 = \frac{1}{2}$.
Equation (\ref{eq:Delta}) shows that the measure of invertible maps falls monotonically through the range of $[1,0]$ as $n$ is varied in the intermediate range $(\frac{4}{3},2)$. The above arguments can be readily extended to the case of qudits as performed in Sec.~\ref{genpaulisec}.

\subsection{More general decoherence functions \label{sec:paulisec}}
Going beyond the form of the decoherence function as in Eq.~(\ref{decohfunc}), one can consider other forms of noninvertible maps.  Obviously, there is a great deal of freedom here, and one might wish to start with simple and natural forms. 
We consider the case where each of the noninvertible maps is characterized by the same decoherence function, 
\begin{equation}
\label{decohfunccos}
p(t) = \frac{1-\cos (\omega t )}{2},
\end{equation}
in place of Eq. (\ref{decohfunc}). The corresponding time-local generator can be evaluated to be
 \begin{equation*}
L_{i}(t)[\rho] = \frac{\omega}{2}\tan(\omega t) (\sigma_i\rho\sigma_i-\rho).
\end{equation*}
The decay rate $ \frac{\omega}{2}\tan(\omega t)$ is indicative of the noninvertibility of the input maps.
The output map $\tilde{\Phi}(t)$ Eq. (\ref{outputmappauli}) satisfies the eigenvalue relation
given by Eq. (\ref{eigpauli}), i.e., $\tilde{\Phi}(t) [\sigma_i] = \lambda_i(t) \sigma_i,$ with
\begin{equation}
\lambda_i(t) = x_i + (1-x_i) \cos (\omega t).
\end{equation} 
The eigenvalues $\lambda_i(t)$ become singular at
\begin{equation}
t^{*}=\frac{1}{\omega}\Big[\cos^{-1}\Big(\frac{x_i}{x_i-1}\Big)\Big].
\label{eq:tstar1}
\end{equation}
From this, it can be seen that for the output map to be invertible (i.e., there exists no finite time $t$ at which the eigenvalues can be zero), all three mixing coefficients $x_i,$ should be greater than 0.5, which is impossible. This entails that the resultant map is noninvertible for all choices of convex combinations. A similar behavior is seen for the choice of $p(t) = \sin (\omega t )$.
We note that this generalizes the example in Ref.~\cite{siudzinska_markovian_2021}, where the shift of the singularity in the input maps under convex combination is noted.

Suppose the choice of the decoherence function $p(t)$ in Eq. (\ref{eq:lambdaz}) is such that $p(t)$ monotonically increases up to a finite time $t^\#$, such that $p(t^\#)=\frac{1}{2}$, and it remains constant thereafter. For example:
\begin{equation}
    p(t) = f(t)[1-\Theta(t-t^\#)] + \frac{1}{2}\Theta(t-t^\#),
    \label{eq:heavyp}
\end{equation}
where $f(t)$ is any monotonically increasing function such that $f(0) = 0, f(t^\#)=\frac{1}{2}$. Thus, each input map is maximally dephasing and thereby noninvertible (all ``azimuthal'' points on the Bloch ball being mapped to a single point on the respective axis). Quite generally, Eq. (\ref{eq:lambdaz}) implies that the potential singular point in the output map is the solution $t^*$ for the equation
\begin{equation}
p(t^{*})=\frac{1}{2(1-x_i)}.
\label{eq:tstarp}
\end{equation}
But this is necessarily greater than $\frac{1}{2}$, considering that all $x_i$'s are nonzero. It follows that this defines a set of noninvertible input maps that produces an invertible output.

The above examples in the subsection make the following pattern evident: The decoherence function $p(t)$ can be such that it just makes the input maps noninvertible, but does not increase far enough to make the output map noninvertible; or it can go all the way to 1, making noninvertibility inevitable. We, therefore, understand that the choice of exponential function as in Eq.~(\ref{decohfunc}) is a good candidate to get invertible or noninvertible maps by tuning the parameter $n$ to enable an intermediate behavior. Similar arguments also apply to the case of generalized Pauli maps, discussed below.

\section{Convex combination of noninvertible Generalized Pauli dynamical maps}
\label{genpaulisec}
Based on the concept of {\it mutually unbiased bases} (MUBs), Nathanson and Ruskai  introduced~\cite{ruskai2007} a generalization of the Pauli dynamical maps to the case of maps on qudits. A set of ($d+1$)-orthonormal bases $\{|\xi_i^{(\alpha)}\rangle,i=0,\ldots,d-1\}$ in $\mathbb{C}^d$ such that  $\langle\xi_i^{(\alpha)}|\xi_j^{(\beta)}\rangle|^2=1/d$ whenever $\alpha\neq\beta$, is a MUB. If the dimension of the Hilbert space $d$ is an integer power of a prime number, then $d +1$ MUBs exist. Defining the rank-1 projectors $P_j^{(\alpha)}=|\xi_j^{(\alpha)}\rangle\langle\xi_j^{(\alpha)}|$, one introduces the ($d+1$)-unitary operators,
\begin{equation}
U_\alpha=\sum_{j=0}^{d-1}\omega^jP_j^{(\alpha)},\qquad \omega=e^{2\pi \imath/d}.
\end{equation}
We consider the convex combination of $d+1$-generalized Pauli dynamical maps ($i =1,...d+1)$~\cite{chruscinski2016pra},
\begin{equation}
\Phi_{i}(t) [\rho]=[1-p(t)]\rho + p(t)\sum_{k=1}^{d-1}U_{i}^{k}\rho U_{i}^ {k \thinspace\dagger},
\end{equation}  each mixed in proportions of $x_i$ as
\begin{equation}
\tilde{\Phi}(t) = \sum_{i=1}^{d+1} x_{i} \Phi_i(t),  \quad (x_i >0, \sum_i x_i =1).
\end{equation}
The map $\tilde {\Phi}(t)$ satisfies the eigenvalue relation,
\begin{equation}
\tilde\Phi(t)[U_{i}^{k}] = \lambda_i(t) U_{i}^{k} , k = 1, ..., d-1,
\end{equation}
with
\begin{equation}
\label{eiggenpauli}
\lambda_i(t) = 1-\frac{d}{d-1} (1-x_i)p(t).
\end{equation}
For the choice of $p(t)$ as in Eq.~(\ref{decohfunc}), the eigenvalues become singular at
\begin{equation}
t^{*}=\frac{1}{c}\ln \Big[\frac{d(1-x_i)}{d(1-x_i)-n(d-1)}\Big].
\label{eq:tstargen}
\end{equation}
For the cases of Pauli dynamical maps and generalized Pauli maps, it was shown that when the input map is invertible, the resultant map is necessarily invertible~\cite{siudzinska_markovian_2021,utagi_singularities_2021}. Invertible input maps correspond to $n\ge \frac{d}{d-1}$. From Eq. (\ref{eq:tstargen}), for noninvertible inputs, if $n< \frac{d^2}{d^2-1}$, the resultant map is necessarily noninvertible.   We have the following result.
 \begin{Theorem}
\label{Theorem1}
	 For generalized Pauli input maps that are of the intermediate noninvertible range \big($\frac{d}{d-1} > n \ge \frac{d^2}{d^2-1}$\big), the fraction of output maps that are invertible is given by $\left[\frac{(d^2(n-1)-n)}{d}\right]^d$.
\end{Theorem}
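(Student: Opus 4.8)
The plan is to translate the analytic invertibility condition into a volume ratio on the probability simplex. First I would pin down when the output map is invertible. From Eq.~(\ref{eiggenpauli}) the eigenvalue $\lambda_i(t)$ is monotonically decreasing in $t$, since $p(t)$ increases and the prefactor $\frac{d}{d-1}(1-x_i)$ is positive; it runs from $\lambda_i(0)=1$ to the asymptotic value $1-\frac{d(1-x_i)}{n(d-1)}$ as $t\to\infty$. A finite singular time exists precisely when this asymptote is negative, equivalently when the denominator $d(1-x_i)-n(d-1)$ inside the logarithm of Eq.~(\ref{eq:tstargen}) is positive. Hence the $i$-th channel contributes no singularity iff $x_i > 1-\frac{n(d-1)}{d}=:a$, and the output map is invertible iff this bound holds jointly for all $d+1$ coefficients.

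Next I would set up the geometry. The admissible mixing coefficients form the standard $d$-dimensional simplex $\{x_i>0,\ \sum_{i=1}^{d+1}x_i=1\}$, and the invertible maps occupy the sub-region $\{x_i>a,\ \sum_i x_i=1\}$. Introducing shifted variables $y_i=x_i-a$ turns this sub-region into $\{y_i>0,\ \sum_i y_i = 1-(d+1)a\}$, which is itself a simplex, namely a uniformly scaled copy of the full one by the linear factor $S:=1-(d+1)a$. Since both sets are $d$-dimensional, the desired fraction is the ratio of their volumes, $S^d$; the normalization $\mu$ used for the $d=1$ case cancels identically in this ratio.

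It then remains to simplify $S$. Substituting $a=1-\frac{n(d-1)}{d}$ gives $S = 1-(d+1)\big(1-\frac{n(d-1)}{d}\big) = \frac{n(d^2-1)-d^2}{d} = \frac{d^2(n-1)-n}{d}$, so the fraction of invertible output maps is $\big[\frac{d^2(n-1)-n}{d}\big]^d$, as claimed. As a consistency check I would verify the endpoints of the intermediate window: $S\to 0$ as $n\to\frac{d^2}{d^2-1}$ (only the equal-mixing point survives, matching the qubit discussion) and $S\to 1$ as $n\to\frac{d}{d-1}$ (the invertible sub-simplex fills the whole simplex).

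The main obstacle is conceptual rather than computational: correctly reading off the invertibility inequality from Eq.~(\ref{eq:tstargen}), in particular recognizing that it is the negativity of the asymptotic eigenvalue (equivalently the sign of the logarithm's denominator) that governs whether a finite singular time exists, and that invertibility demands the bound $x_i>a$ simultaneously for each of the $d+1$ coefficients. Once that joint constraint is secured, the remaining work is the elementary volume-scaling property of simplices; the only additional care needed is to confirm that the shifted region is nonempty, which is guaranteed by $(d+1)a<1$, i.e.\ $n>\frac{d^2}{d^2-1}$, so that $S^d$ is a genuine (positive) fraction throughout the stated range.
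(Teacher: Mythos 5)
Your proposal is correct and follows the same overall strategy as the paper: read off from Eq.~(\ref{eq:tstargen}) that invertibility of the output requires $x_i \ge 1-\tfrac{n(d-1)}{d}$ simultaneously for all $d+1$ coefficients, and then compute the fraction as a ratio of volumes on the probability simplex. The one place you genuinely diverge is the evaluation of that volume: the paper sets up and grinds through the nested integral with lower limits $g(d,n)$ and upper limits $f(j)-X_j$, normalized by $\mu = 1/d!$, whereas you shift by $y_i = x_i - a$ and invoke the affine-scaling property of simplices to get the ratio $S^d$ with $S = 1-(d+1)a$ directly. Your route is cleaner and makes the normalization cancel automatically; it also makes the endpoint behavior ($S\to 0$ and $S\to 1$ at the two ends of the intermediate window) transparent. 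One pedantic remark: at $x_i = a$ exactly the asymptotic eigenvalue is zero, so no \emph{finite} singular time exists and the map is still invertible there; your strict inequality $x_i > a$ versus the paper's $x_i \ge a$ is a measure-zero discrepancy that does not affect the result.
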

\begin{proof}
	From Eq. (\ref{eq:tstargen}) it follows that the requirement for invertibility is
	\begin{equation}
x_i \ge 1-\frac{n (d-1)}{d} \equiv g(d,n).
\label{eq:xi}
	\end{equation}
The input map is invertible for $n \ge \frac{d}{d-1}$, and is a semigroup for $n = \frac{d}{d-1}$.
For invertible inputs $n\ge \frac{d}{d-1}$, the right-hand side above is negative and, thus, any mixing coefficient $x_i$ satisfies the lower bound in Eq. (\ref{eq:xi}). This bound monotonically increases as $n$ decreases in Eq. (\ref{eq:xi}). If $n < \frac{d^2}{d^2-1}$, then $x_i < \frac{1}{d+1}$, a constraint that cannot be satisfied by all three maps \big(since in that case we would have $\sum_i x_i < 1$\big).

In the intermediate range of noninvertible inputs $n \in (\frac{d^2}{d^2-1},\frac{d}{d-1})$, a fraction of the output maps will be invertible. To evaluate this, first we note that $x_1$ (any given $x_j$) must range from the minimum of $g(d,n)$ in view of Eq. (\ref{eq:xi}), with, at least, this mass of probability being assigned to the remaining $d$ maps. Therefore, the allowed maximum is bounded by $1-dg(d,n)$. Accordingly,
	\begin{equation}
	x_1 \in [g(d,n), 1-d g(d,n)] = [g(d,n), (d-1)(n-1)].
	\end{equation} 
 Similarly, $x_2$ (or, any but $x_1$) must range from the minimum of $g(d,n)$ in view of Eq. (\ref{eq:xi}) with, at least, this mass of probability being assigned to the remaining $(d-1)$ maps less the mass $x_1$ already assigned to the first map \#1. Therefore, the allowed range is given by
 	\begin{equation}
	x_2 \in [g(d,n), 1-(d-1)g(d,n) - x_1].
	\end{equation}
Proceeding analogously, for a general mixing variable $x_j$, we have 
 \begin{equation}
 x_{j+1} \in  [g(d,n), f(j) - X_j],
 \end{equation}
 where $f(j)$ and $X_j$ are as defined by
\begin{align}
 f(j) &\equiv 1-(d-j) g(d,n),\nonumber\\
 X_j &\equiv x_j + x_{j-1} + \cdots + x_1.
 \end{align}
Given dimension $d$ and parameter $n$, the measure of the invertible region in the space of resultant maps is evaluated to be
  \begin{equation}
  \begin{split}
\Delta_{\rm invert}(d,n) &=  \frac{1}{\mu}\int_{g(d,n)}^{(d-1)(n-1)}dx_1\cdots \\  
  &\int_{g(d,n)}^{f(j)-X_j} dx_{j+1} \cdots\int_{g(d,n)}^{f(d-1)-X_{d-1}}dx_d \\
  &=   \left[\frac{d^2(n-1)-n}{d}\right]^d. 
  \end{split}
  \label{eq:Deltagen} 
  \end{equation}
Here, 
  \begin{equation}
  \mu = \int_{0}^{1-X_j}dx_d \int_{0}^{1-X_{j-1}}dx_{d-1}\cdots\int_{0}^{1} dx_1 = \frac{1}{d!}
  \end{equation}is the normalization constant.
\end{proof}
Equation (\ref{eq:Deltagen}) implies that for any given dimension $d$, the invertible fraction $\Delta_{\rm invert}$ ranges from $\Delta_{\rm invert}(d,\frac{d^2}{d^2-1})=0$ to $\Delta_{\rm invert}(d,\frac{d}{d-1})=1$. For fixed $n$, and sufficiently large $d$, we find that $\Delta_{\rm invert}$ approaches unity with a superexponential rate $O(d^d)$. However, to plot this, we remark that the intermediate region $[\frac{d^2}{d^2-1},\frac{d}{d-1}]$ quadratically shrinks into a narrow interval that converges towards unity. Therefore, to depict the dependence of the invertible fraction on $d$, we must choose a range of $d$ values such that a given $n$ value falls within their corresponding interval. An example is given in Fig. \ref{fig:Deltainvert}, varying $d$ discretely over the prime powers from 7 to 32. The intermediate regions of $n$ corresponding to these two values are, respectively, [1.021, 1.167] and [1.001, 1.032]. Here we choose $n=1.03$, which is contained in both intervals and, by interpolation, in the intervals corresponding to all the intermediate values of $d$. (The intervals corresponding to prime powers equal to and below $d=5$ have no overlap with the intervals corresponding to prime powers equal to and above $d=32$, and, hence, are excluded in this representation).
\begin{figure}
	\includegraphics[width=3in]{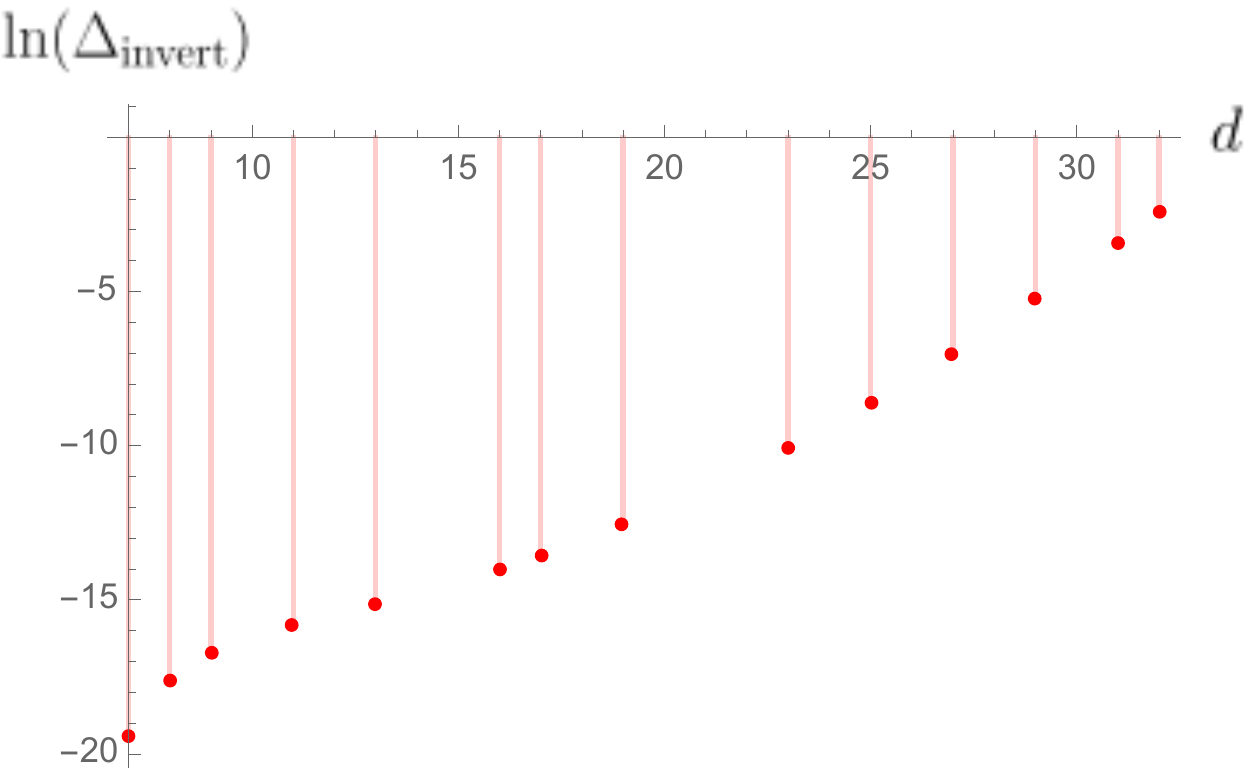}
	\caption{Logarithm plot of the measure of the invertible fraction as a function of $d$ for $n$ {:=} 1.03, which falls within the interval $[\frac{d^2}{d^2-1},\frac{d}{d-1}]$ for $d$ taking prime powers between and including 7 and 32. The plot depicts how the invertible fraction $\Delta_{\rm invert}$ rises at a superexponential rate $O(d^d)$ in the considered parameter space.}
	\label{fig:Deltainvert}
\end{figure}

\section{Discussions and conclusions}
\label{conc}
A quantum resource theory (RT) formalizes the separation of a given set of objects of interest (typically represented as states) into resource states and free (nonresource) states. The idea is that there are free (or ``easy'') operations in the RT such that the set of free states is closed under the action of free operations. For example, in a resource theory of quantum entanglement, the entangled (respectively, separable) states are the resource (respectively, free) states, while local operations and classical communication (LOCC) constitute the free operations. It is well known that separable states cannot become entangled under LOCC. A RT, therefore, addresses the question of tasks which are (im)possible using free operations~\cite{chitambar2019quantum}. In a convex RT, the free states form a convex set. 

One could consider a quantum RT for noninvertibility, considering how this feature can be converted into corresponding limitations on quantum information processing. Consider, for example, a map $\mathcal{E}$ described by the noise $\rho \rightarrow \sum_j K_j \rho K_j^\dagger$, where $K_j$'s are Kraus operators, acting on one-half of the maximally entangled state $\frac{1}{\sqrt{D}}\sum_{k=1}^D\ket{k,k}_{A,B}$ between systems $A$ and $B$. The resulting mixed entangled state can act as a noisy quantum teleportation map. Given an initial state $\rho$ of system $S$, this map teleports $\rho$ according to
\begin{eqnarray}
\rho &\rightarrow& \sum_{rjklm} \bra{l,l}_{SA} \rho_S \otimes K_r \ket{j}_A\bra{k} K_r^\dagger\otimes \ket{j}_B\bra{k} m,m\rangle_{SA}
\nonumber \\
&=& \sum_r K_r^\dagger \rho K_r \equiv \mathcal{E}_T(\rho),
\label{eq:teleportation}
\end{eqnarray}
i.e., a map characterized by Kraus operators $\{K_j\}$ is mapped to one characterized by the Kraus operators $\{K_j^\dagger\}$.
 It follows that if the dynamical map described by $\mathcal{E}$ is noninvertible, then so is the effective teleportation map 
$\mathcal{E}_T$. Thus, we find that quantum noninvertibility as a \textit{negative} resource, i.e., one that is convertible to a limiting (rather than enabling) feature of noisy teleportation, an idea that can be extended to certain other quantum information processing tasks.

In this framework, invertible maps would correspond to free states and noninvertible maps to resource states. Free operations must be resource nonincreasing. For this we consider, the composition of maps, i.e., a sequential application. Invertible maps being bijective, the composition of bijective maps is also a bijection. Since bijective maps have a unique inverse, the composed map, therefore, is also invertible. Thus, the set of invertible maps is closed under composition.  
Given such a RT, our results imply that in the context of qudit Pauli dynamical maps, noninvertibility is a convex resource. Mixing invertible maps can never result in noninvertibility. On the other hand, a mixture of noninvertible maps satisfying the condition $n \in [\frac{d}{d-1}, \frac{d^2}{d^2-1}]$ produces a set of invertible maps with a finite nonzero measure. 
The convexity of the invertible set may be contrasted with the nonconvexity of the set of quantum Markovian (CP divisible) dynamical maps. On that basis, it was argued that CP indivisibility does not yield a convex resource theory~\cite{jagadish_convex_2020}.

In this paper, we have studied the properties of dynamical maps that result under mixing of generalized Pauli dynamical maps that are noninvertible. In any given dimension $d$, (non)invertibility of the input maps is parametrized through a parameter $n$ , which ensures invertibility when sufficiently large. 
Analogously, for noninvertible dynamical maps with sufficiently small $n$, the resultant dynamical map is necessarily noninvertible. We quantify the fraction of invertible maps in the intermediate range of $n$.  Specifically, we have shown that the set of such invertible maps is of nonzero measure. 

\acknowledgements

V.J. acknowledges financial support by the Foundation for Polish Science
through the TEAM-NET Project (Contract No. POIR.04.04.00-00-17C1/18-00). R.S.   acknowledges the support of the Department of Science and Technology (DST), India, Grant No.: MTR/2019/001516.
The work of F.P. is based upon research supported by the South African Research Chair Initiative of the Department of Science and Innovation and National Research Foundation (NRF) (Grant No. UID: 64812).  \vspace{-3 mm}


\begin{thebibliography}{10}

\bibitem{petruccione}
Heinz-Peter Breuer and Francesco Petruccione.
\newblock {\em The Theory of Open Quantum Systems}.
\newblock (Oxford University Press, Oxford, 2002).

\bibitem{sudarshan_stochastic_1961}
E.~C.~G. Sudarshan, P.~M. Mathews, and Jayaseetha Rau.
\newblock Stochastic {Dynamics} of {Quantum}-{Mechanical} {Systems}.
\newblock {\em Phys. Rev.}, 121(3):920--924 (1961).

\bibitem{Quanta77}
Vinayak Jagadish and Francesco Petruccione.
\newblock An invitation to quantum channels.
\newblock {\em Quanta}, 7(1):54--67 (2018).

\bibitem{gorini_completely_1976}
Vittorio Gorini, Andrzej Kossakowski, and E.~C.~G. Sudarshan.
\newblock Completely positive dynamical semigroups of {N}‐level systems.
\newblock {\em J. Math. Phys.}, 17, 821 (1976).

\bibitem{rivas_entanglement_2010}
\'{A}ngel Rivas, Susana~F. Huelga, and Martin~B. Plenio.
\newblock Entanglement and {Non}-{Markovianity} of {quantum} {evolutions}.
\newblock {\em Phys. Rev. Lett.}, 105(5):050403 (2010).

\bibitem{hall2010}
Michael J.~W. Hall, James~D. Cresser, Li~Li, and Erika Andersson.
\newblock Canonical form of master equations and characterization of
  non-markovianity.
\newblock {\em Phys. Rev. A}, 89:042120 (2014).

\bibitem{chruscinski_non-markovianity_2015}
Dariusz Chru{\'{s}}ci{\'{n}}ski and Filip~A. Wudarski.
\newblock Non-{Markovianity} degree for random unitary evolution.
\newblock {\em Phys. Rev. A}, 91(1):012104 (2015).

\bibitem{wolf_assessing_2008}
M.~M. Wolf, J.~Eisert, T.~S. Cubitt, and J.~I. Cirac.
\newblock Assessing {Non}-{Markovian} {Quantum} {Dynamics}.
\newblock {\em Phys. Rev. Lett.}, 101(15):150402 (2008).

\bibitem{breuer_mixing-induced_2018}
Heinz-Peter Breuer, Giulio Amato, and Bassano Vacchini.
\newblock Mixing-induced quantum non-{Markovianity} and information flow.
\newblock {\em New J. Phys.}, 20(4):043007 (2018).

\bibitem{wudarski2016}
Filip~A. Wudarski and Dariusz Chru{\'{s}}ci{\'{n}}ski.
\newblock Markovian semigroup from non-markovian evolutions.
\newblock {\em Phys. Rev. A}, 93:042120 (2016).

\bibitem{megier_eternal_2017}
Nina Megier, Dariusz Chru{\'{s}}ci{\'{n}}ski, Jyrki Piilo, and Walter~T.
  Strunz.
\newblock Eternal non-{Markovianity}: from random unitary to {Markov} chain
  realisations.
\newblock {\em Sci. Rep.}, 7, 6379 (2017).


\bibitem{siudzinska_jpa_2020}
Katarzyna Siudzi{\'{n}}ska and Dariusz Chru{\'{s}}ci{\'{n}}ski.
\newblock Quantum evolution with a large number of negative decoherence rates.
\newblock {\em J. Phys. A: Math. Theor.},  53: 375305 (2020).


\bibitem{jagadish_convex_2020}
Vinayak Jagadish, R.~Srikanth, and Francesco Petruccione.
\newblock Convex combinations of {Pauli} semigroups: {Geometry}, measure, and
  an application.
\newblock {\em Phys. Rev. A}, 101(6):062304 (2020).

\bibitem{jagadish_nonqds_2020}
Vinayak Jagadish, R.~Srikanth, and Francesco Petruccione.
\newblock Convex combinations of {CP}-divisible {Pauli} channels that are not
  semigroups.
\newblock {\em Phys. Lett. A}, 384 (35):126907 (2020).

\bibitem{desantis2020}
Dario De~Santis and Vittorio Giovannetti.
\newblock Measuring non-markovianity via incoherent mixing with markovian
  dynamics.
\newblock {\em Phys. Rev. A}, 103:012218 (2021).

\bibitem{hou_singularity_2012}
S.~C. Hou, X.~X. Yi, S.~X. Yu, and C.~H. Oh.
\newblock Singularity of dynamical maps.
\newblock {\em Phys. Rev. A}, 86(1):012101 (2012).

\bibitem{jagadish_measure2_2019}
Vinayak Jagadish, R.~Srikanth, and Francesco Petruccione.
\newblock Measure of not-completely-positive qubit maps: {The} general case.
\newblock {\em Phys. Rev. A}, 100(1):012336 (2019).

\bibitem{jagadish_initial_2021}
Vinayak Jagadish, R.~Srikanth, and Francesco Petruccione.
\newblock Initial entanglement, entangling unitaries, and completely positive
  maps.
\newblock {\em arXiv:2012.12292 [quant-ph]} (2020).

\bibitem{siudzinska_markovian_2021}
Katarzyna Siudzi{\'{n}}ska.
\newblock Markovian semigroup from mixing noninvertible dynamical maps.
\newblock {\em Phys. Rev. A}, 103(2):022605 (2021).

\bibitem{utagi_singularities_2021}
Shrikant Utagi, Vinod~N. Rao, R.~Srikanth, and Subhashish Banerjee.
\newblock Singularities, mixing, and non-{Markovianity} of {Pauli} dynamical
  maps.
\newblock {\em Phys. Rev. A}, 103(4):042610 (2021).

\bibitem{jagadish2022noninvertibility}
Vinayak Jagadish, R.~Srikanth, and Francesco Petruccione.
\newblock Noninvertibility as a requirement for creating a semigroup under
  convex combinations of channels.
\newblock {\em Phys. Rev. A}, 105:032422 (2022).

\bibitem{siudzinska_jpa_2022}
Katarzyna Siudzi{\'{n}}ska.
\newblock Non-Markovianity criteria for mixtures of noninvertible Pauli dynamical maps.
\newblock {\em J. Phys. A: Math. Theor.}, 55: 215201 (2022).

\bibitem{preskill_quantum_2018}
John Preskill.
\newblock Quantum {Computing} in the {NISQ} era and beyond.
\newblock {\em Quantum}, 2:79 (2018). 


\bibitem{ruskai2007}
Michael Nathanson and Mary~Beth Ruskai.
\newblock Pauli diagonal channels constant on axes.
\newblock {\em J. Phys. A: Math.Theor.}, 40(28):8171--8204 (2007).

\bibitem{chruscinski2016pra}
Dariusz Chru{\'{s}}ci{\'{n}}ski and Katarzyna Siudzi{\'{n}}ska.
\newblock Generalized Pauli channels and a class of non-Markovian quantum
  evolution.
\newblock {\em Phys. Rev. A}, 94:022118 (2016).

\bibitem{chitambar2019quantum}
Eric Chitambar and Gilad Gour.
\newblock Quantum resource theories.
\newblock {\em Rev. Mod. Phys.}, 91:025001 (2019).

\end{thebibliography}
\end{document}